\definecolor{dark-red}{rgb}{0.4,0.15,0.15}
\definecolor{dark-blue}{rgb}{0.15,0.15,0.4}
\definecolor{medium-blue}{rgb}{0,0,0.5}
\newtheorem{theorem}{Theorem}
\newtheorem{lemma}{Lemma}
\title{Strict Comparisons of Infinite Utility Streams\footnote{We thank Marcus Pivato for helpful discussion.}}
\author{Michael Greinecker\thanks{ENS Paris-Saclay, CEPS, \href{mailto:michael.greinecker@ens-paris-saclay.fr}{michael.greinecker@ens-paris-saclay.fr}} and Michael Nielsen\thanks{Texas A \& M University, 
Department of Philosophy, \href{mailto:michael.nielsen@tamu.edu}{michael.nielsen@tamu.edu}}}
\begin{document}
\maketitle
\begin{abstract}
There exists a preference relation on infinite utility streams that does not discriminate between different periods, satisfies the Pareto criterion, and so that almost all pairs of utility streams are strictly comparable. Such a preference relation provides a counterexample to a claim in [Zame, William R. ``Can intergenerational equity be operationalized?'' Theoretical Economics 2.2 (2007): 187-202.] 
\end{abstract}

\section*{Introduction}

Suppose there is an infinite sequence of generations, living neatly one after another. If we solve the utilitarian aggregation problem within each generation, we are left with the problem of how to compare the resulting utility streams.\footnote{\citet{PivatoFleurbaey2024} survey the state of the art.} If we do not want to discriminate arbitrarily between different generations, our social preferences should  be unaffected by swapping the utilities of two generations (and thus, any finite number of generations). If we are sensitive to the plight of every generation, our social preferences should satisfy a variant of the Pareto criterion.  

\citet{Svensson1980} proved the existence of a (complete and transitive) preference relation on infinite utility streams that satisfies the Pareto criterion and that is invariant under switching the utility levels of finitely many generations. Such a preference relation must be an unruly object. \citet{Diamond1965} has shown that a preference relation on infinite utility streams that satisfies the Pareto criterion and that is invariant under switching the utility levels of finitely many generation cannot be continuous in the topology of uniform convergence, and \citet{BasuMitra2003} have shown that it cannot be represented by a utility function. \citet{Zame2007}, \citet{Lauwers2010}, \citet{Dubey2011} and \cite{DubeyLaguzzi2023} have shown that such preferences cannot be proven to exist if one weakens the axiom of choice to the axiom of dependent choice, which suffices for almost everything the working economist has to deal with.\footnote{\citet{Lauwers2010} and \citet{Dubey2011} show that the existence of such preferences implies the existence of a non-Ramsey set, \citet{Zame2007}, shows that the existence of such preferences implies the existence of a non-measurable set, and \cite{DubeyLaguzzi2023} show that the existence of such preferences implies the existence of a non-Baire set. By results of \citet{Shelah1984}, it is relatively consistent that dependent choice holds and all sets of reals are Baire sets, while the consistency of dependent choice and all sets of reals being measurable holds only relatively to an inaccessible cardinal. The consistency strength of having dependent choice and only Ramsey sets remains unknown.} That one cannot prove the existence of such preferences without heavy-handed use of the axiom of choice has previously been conjectured by \citet{FleurbaeyMichel2003}.

\citet{Zame2007} also claimed (in his Theorem 1 and Theorem 1') that an irreflexive strict preference relation that is invariant under switching the utility levels of finitely many generations must be extremely indecisive in that the set of pairs of utility streams that cannot be strictly compared must have full outer measure. This is not so. We show that preferences on infinite utility streams can be quite decisive, even if we require them to satisfy the strict Pareto criterion. For the preferences we construct, the set of pairs of utility streams that cannot be strictly compared (in our case, the graph of the indifference relation) is measurable and has measure zero. To ``construct'' such preferences, a tiny modification of the original argument of \citet{Svensson1980} suffices.

\section*{Preliminaries and Main Result}
For two sequences $x=(x_n)$ and $y=(y_n)$ in $\mathbb{R}^\mathbb{N}$, we write $x\geq y$ if $x_n\geq y_n$ for all $n\in\mathbb{N}$, $x>y$ if $x\geq y$ and $x\neq y$, and $x\gg y$ if $x_n>y_n$ for all $n\in\mathbb{N}$. For a generic (binary) relation $\succeq$, we denote its asymmetric part by $\succ$ and its symmetric part by $\sim$. A relation $\succeq$ on a subset of $\mathbb{R}^\mathbb{N}$ satisfies \emph{strict Pareto} if $x>y$ implies $x\succ y$, and satisfies \emph{weak Pareto} if $x\gg y$ implies $x\succ y$. 

A \emph{finite permutation} is a permutation (bijection) $\sigma:\mathbb{N}\to\mathbb{N}$ such that $n=\sigma(n)$ for all but finitely many $n\in\mathbb{N}$. We denote the set of finite permutations by $\mathbb{F}$. Under the operation of composition, $\mathbb{F}$ is a countable group. If $x\in\mathbb{R}^\mathbb{N}$ and $\sigma\in\mathbb{F}$, we write, slightly abusing notation, $\sigma(x)$ for the sequence $(x_{\sigma(n)})$. A relation $\succeq$ on $X=[0,1]^\mathbb{N}$ or $X=\{0,1\}^\mathbb{N}$ satisfies \emph{intergenerational equity} if $x\succ y$ holds if and only if $\sigma(x)\succ\tau(y)$ for all $x,y\in X$ and $\sigma,\tau\in\mathbb{F}$. The following characterization of intergenerational equity for well-behaved preferences (already noted by \citet{Zame2007}) is straightforward to prove.

\begin{lemma}Let $\succeq$ be a complete and transitive relation on $X=[0,1]^\mathbb{N}$ or $X=\{0,1\}^\mathbb{N}$. Then $\succeq$ satisfies intergenerational equity if and only if $x\sim\sigma(x)$ holds for all $x\in X$ and $\sigma\in\mathbb{F}$.
\end{lemma}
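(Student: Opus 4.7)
The plan is to prove both implications directly, with the main lever being that under completeness, $\sim$ and $\succ$ partition pairs so that strict preference is just "not weak preference in the opposite direction". The forward direction uses the hypothesis of intergenerational equity against a cleverly chosen instance to force a contradiction with irreflexivity of $\succ$; the reverse uses transitivity of $\succeq$ to chain across the indifferences.

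For the forward direction, assume intergenerational equity holds and fix $x\in X$ and $\sigma\in\mathbb{F}$. I want to rule out $x\succ\sigma(x)$ and $\sigma(x)\succ x$, so completeness yields $x\sim\sigma(x)$. Suppose toward contradiction that $x\succ\sigma(x)$. By intergenerational equity (taking the identity on the left and $\sigma^{-1}\in\mathbb{F}$ on the right), this is equivalent to $x\succ\sigma^{-1}(\sigma(x))=x$, contradicting irreflexivity of $\succ$ (which follows from completeness and transitivity). The same argument with $\sigma$ and $\sigma^{-1}$ swapped rules out $\sigma(x)\succ x$.

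For the reverse direction, assume $x\sim\sigma(x)$ for all $x,\sigma$. I need to show $x\succ y\iff \sigma(x)\succ\tau(y)$ for all $\sigma,\tau\in\mathbb{F}$. Assume $x\succ y$ and fix $\sigma,\tau\in\mathbb{F}$. From $\sigma(x)\sim x$ I have $\sigma(x)\succeq x$, and from $y\sim\tau(y)$ I have $y\succeq \tau(y)$; combining with $x\succeq y$ via transitivity yields $\sigma(x)\succeq \tau(y)$. To upgrade to strict preference, suppose for contradiction that $\tau(y)\succeq \sigma(x)$. Chaining $y\succeq\tau(y)\succeq\sigma(x)\succeq x$ gives $y\succeq x$, contradicting $x\succ y$. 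Hence $\sigma(x)\succ\tau(y)$. The converse implication is symmetric: if $\sigma(x)\succ\tau(y)$, apply the same argument to the pair $(\sigma(x),\tau(y))$ with the finite permutations $\sigma^{-1}$ and $\tau^{-1}$ to obtain $x\succ y$.

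I do not expect a serious obstacle here; both directions are essentially bookkeeping with the ordering axioms. The only step that requires a moment of care is extracting strictness in the reverse direction, since a naive chain of $\sim$'s and one $\succ$ does not automatically yield $\succ$ in a non-transitive or incomplete setting, but completeness plus transitivity of $\succeq$ handles it cleanly as above.
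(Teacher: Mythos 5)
Your proof is correct and is exactly the routine verification the paper omits as ``straightforward'': the forward direction cancels $\sigma$ via $\sigma^{-1}$ against irreflexivity of $\succ$ and invokes completeness, and the reverse direction chains indifferences through transitivity and uses completeness to upgrade to strict preference. The only cosmetic slip is attributing irreflexivity of $\succ$ to completeness and transitivity, whereas it holds automatically for the asymmetric part of any relation; this does not affect the argument.
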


We endow  $X=[0,1]^\mathbb{N}$ and $X=\{0,1\}^\mathbb{N}$ with their natural product topologies and Borel $\sigma$-algebras. Either space is Polish (separable and completely metrizable). We endow (the Borel sets of) $X=[0,1]^\mathbb{N}$ with the countable product of Lebesgue measure, and (the Borel sets of) $X=\{0,1\}^\mathbb{N}$ with the countable fair coin-flipping measure. In either case, we endow (the Borel sets of) $X\times X$ with the resulting product measures. These product measures are atomless, which in the present context is equivalent to no point having strictly positive mass. A property holds for \emph{almost all} pairs in $X\times X$ if it holds outside a Borel set of measure zero. The outer measure of an arbitrary subset of $X\times X$ is the infimum of the measures of all Borel supersets. The inner measure of an arbitrary subset of $X\times X$ is the supremum of the measures of all Borel subsets. 

We now have all the ingredients for our result.

\begin{theorem}\label{main}There exist complete and transitive relations on $X=[0,1]^\mathbb{N}$ and $X=\{0,1\}^\mathbb{N}$ that satisfy strict Pareto, intergenerational equity, and such that almost all pairs in $X\times X$ are strictly comparable.
\end{theorem}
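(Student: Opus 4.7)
The plan is to adapt Svensson's construction so that the indifference relation of the resulting preference is exactly the orbit equivalence generated by $\mathbb{F}$; the graph of this equivalence then has product measure zero, which is the quantitative content of the theorem. I give the argument for $X = [0,1]^{\mathbb{N}}$; the case $X=\{0,1\}^{\mathbb{N}}$ is identical.

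First I introduce a preorder $\succeq_0$ on $X$ by declaring $x \succeq_0 y$ iff there exists $\pi \in \mathbb{F}$ with $\pi(x) \geq y$. Reflexivity is immediate and transitivity follows because $\mathbb{F}$ is a group and coordinate permutations preserve the componentwise order. I then identify the symmetric part $\sim_0$: clearly $x \sim_0 \sigma(x)$ for every $\sigma \in \mathbb{F}$, and conversely, if $\pi(x) \geq y$ and $\pi'(y) \geq x$, then $(\pi\pi')(x) \geq \pi'(y) \geq x$. Since $\pi\pi' \in \mathbb{F}$ decomposes into finitely many finite cycles, iterating the inequality $x_{(\pi\pi')(n)} \geq x_n$ around each cycle forces equality throughout, so $(\pi\pi')(x) = x$ and hence $\pi'(y) = x$. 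Thus $\sim_0$ is exactly the orbit equivalence of the $\mathbb{F}$-action. Strict Pareto is also visible at this stage: if $x > y$, then $\mathrm{id}(x) = x \geq y$ gives $x \succeq_0 y$, while the cycle argument just described rules out $y \succeq_0 x$.

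Next, $\succeq_0$ induces a genuine partial order on the quotient $X/\sim_0$, which by Szpilrajn's extension theorem (an application of Zorn's lemma) extends to a linear order on the quotient. Pulling back yields a complete and transitive relation $\succeq$ on $X$ that extends $\succeq_0$ and whose symmetric part coincides with $\sim_0$. Strict Pareto for $\succeq$ is inherited from $\succeq_0$, and since $x \sim \sigma(x)$ for every $\sigma \in \mathbb{F}$, the lemma yields intergenerational equity.

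Finally, the indifference set equals $\bigcup_{\sigma \in \mathbb{F}} \Gamma_\sigma$, where $\Gamma_\sigma = \{(x,\sigma(x)) : x \in X\}$ is the graph of the continuous map $x \mapsto \sigma(x)$. Each $\Gamma_\sigma$ is closed in $X \times X$, hence Borel, and by Fubini has product measure zero because its $x$-section is the singleton $\{\sigma(x)\}$ and the measure on $X$ is atomless. Since $\mathbb{F}$ is countable, the indifference set is a Borel null set; equivalently, almost every pair is strictly comparable. The only delicate step is ensuring that the extension does not introduce spurious indifferences, and passing to the quotient before invoking Szpilrajn takes care of this automatically; the rest is a routine consequence of the countability of $\mathbb{F}$ and the atomlessness of the product measure.
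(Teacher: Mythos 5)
Your proposal is correct and follows essentially the same route as the paper: quotient by the orbit equivalence of $\mathbb{F}$, observe that the Pareto-up-to-finite-permutation relation is a partial order on the quotient (your cycle argument supplies the antisymmetry the paper leaves as ``easily seen''), extend by Szpilrajn, pull back, and note that the indifference set is a countable union of null graphs. The only cosmetic difference is that you use continuity of the coordinate permutations to see each graph is closed, where the paper cites the more general fact that graphs of Borel functions are Borel.
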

\begin{proof}
Write $x \equiv y$ if there exists a finite permutation $\sigma$ such that $\sigma(x)=y$. The group structure on $\mathbb{F}$ implies that $\equiv$ is an equivalence relation. By \citet[Theorem 6.4]{Kechris1995}, the graph of a Borel measurable function between two Polish spaces is measurable. By Fubini's theorem, it has measure zero under any product probability measure with atomless marginals. Consequently, the graph of $\equiv$ can be written as a countable union of measure zero sets as
\[\bigcup_{\sigma\in\mathbb{F}}\Big\{\big(x,\sigma(x)\big)\mid x\in X\Big\}\]
and is, therefore, a measure zero set itself.
We define a relation $\geqq$ on the partition into equivalence classes $X/\equiv$ such that for two equivalence classes, $[x]$ and $[y]$, we have $[x] \geqq [y]$ exactly when $x\geq \sigma(y)$ holds for some finite permutation $\sigma$. Then, $\geqq$ is easily seen to be a partial order. By the Szpilrajn extension theorem, \citet{Szpilrajn1930}, $\geqq$ extends to a linear order on $X/\equiv$. We now define the desired relation $\succeq$ on $X$ so that $x\succeq y$ holds exactly when $[x] \geqq [y]$. The set of pairs that are not strictly comparable is then exactly $\equiv$, a set of measure zero.
\end{proof}

The ``construction'' we used in the proof of Theorem \ref{main} is virtually identical to the one used by \citet{Svensson1980}. The only difference is that we use an extension from partial orders to linear orders, while Svensson uses an extension from preorders to complete preorders that preserves strict preferences. The latter kind of extension might create additional indifferences, the former does not.

\section*{Discussion} 

The choice of the domains $X=[0,1]^\mathbb{N}$ and $X=\{0,1\}^\mathbb{N}$ follows the literature but is inessential to the argument. The argument shows that for every Borel set $S\subseteq\mathbb{R}$, there exists a complete and transitive relations on $S^\mathbb{N}$ that satisfies strict Pareto and intergenerational equity. If we put on (the Borel sets of) $S$ any Borel probability measure that is not simply a point mass, the corresponding product measures will be atomless, and the graph of the indifference relation will have measure zero.

\citet{Zame2007} uses his Theorem 1 (or 1') to prove the remaining theorems in his paper. However, this does not cause any problems for the remaining theorems. In the proof of his Theorem 1, Zame starts with a strict relation $\succ$ on $X$ that satisfies intergenerational equity and proves that the sets
\[\big\{(x,y)\in X\times X\mid x\succ y\big\}\textnormal{ and }\big\{(x,y)\in X\times X\mid y\succ s\big\}\]
both have inner measure zero, and concludes that the set 
\[\big\{(x,y)\in X\times X\mid x\succ y\big\}\cup\big\{(x,y)\in X\times X\mid y\succ s\big\}\]
has inner measure zero, too. This last step is not warranted; inner measure is superadditive but, in general, not subadditive. But the step is warranted when these two sets are measurable, as they are when Zame uses his Theorem 1 (or 1') to prove the remaining theorems.


\bibliography{References}

\end{document}